\documentclass[11pt]{amsart}

\usepackage{amsmath, amssymb, amsthm}
\usepackage{hyperref}
\usepackage[backend=biber,sorting=none]{biblatex}
\usepackage{listings}
\usepackage{xcolor}

\theoremstyle{plain}
\newtheorem{theorem}{Theorem}[section]
\newtheorem{lemma}[theorem]{Lemma}
\newtheorem{conjecture}[theorem]{Conjecture}
\newtheorem{proposition}[theorem]{Proposition}

\theoremstyle{definition}
\newtheorem{definition}[theorem]{Definition}

\theoremstyle{remark}
\newtheorem{remark}[theorem]{Remark}

\title[Spectral Entropy of Distance Distributions]{A Scale-Invariant Entropy Statistic for Distance Distributions}
\author{Mohamed Gewily}

\begin{document}

\begin{abstract}
We introduce a family of scale-invariant entropy statistics derived from logarithmically aggregated distance distributions of point processes, with prime numbers serving as a motivating example.
The construction associates to each finite configuration a scalar quantity encoding structural features of relative spacing while remaining insensitive to absolute scale.
This work is intended as a methodological contribution rather than a source of new raw results.
\end{abstract}

\maketitle

\section{Overview and Objectives}

This paper introduces a scale-invariant entropy observable designed to quantify structural features of distance distributions arising from discrete point configurations on the real line. The central idea is to represent inter-point distances on a logarithmic axis at fixed resolution.
This eliminates dependence on absolute scale while preserving relative spacing information.
Discrete harmonic analysis on the resulting log-distance distribution produces a finite spectrum whose frequency content reflects regularity or randomness.
Compressing this spectrum via entropy yields a robust scalar statistic suitable for comparison across scales, configurations, and null models. 

The construction proceeds through four steps:
\begin{enumerate}
  \item extraction of truncated distance data,
  \item logarithmic aggregation at fixed resolution,
  \item discrete harmonic analysis on the log-distance coordinate,
  \item entropy-based compression of spectral information.
\end{enumerate}

The framework is conducted at fixed logarithmic binning rather than under asymptotic refinement.
Fixing the binning defines the observational scale, yielding numerically stable, interpretable quantities that can be directly compared with stochastic reference models. Using prime numbers as a motivating example, we study the expected behavior of the statistic, and we conjecture about its convergence.
\section{Prime Distance Distributions}

Let $\mathcal P = \{p_1 < p_2 < \dots\}$ denote the set of prime numbers.

\begin{definition}[Truncated Prime Distance Measure]
Fix a prime $p$ and a truncation radius $R>0$.
Define the truncated distance multiset
\[
D_R(p) := \{\, |p-q| : q \in \mathcal P,\ q \neq p,\ |p-q| \le R \,\}.
\]
We identify this multiset with its associated counting measure on $\mathbb R_{>0}$.
\end{definition}

\begin{definition}[Additivity]
For a finite multiset of primes $\mathbf p = \{p_1,\dots,p_k\}$, define
\[
D_R(\mathbf p) := \sum_{j=1}^k D_R(p_j),
\]
where addition denotes addition of measures.
\end{definition}

\begin{remark}
Additivity enables aggregation of spacing information across multiple base points, permitting the study of structured collections such as prime tuples or other composite configurations.
\end{remark}

\section{Logarithmic Aggregation at Fixed Resolution}

Absolute prime distances vary strongly with scale and are not directly comparable across magnitude regimes.
To address this, distances are represented on a logarithmic axis with fixed resolution.

Fix an integer $M \ge 2$.
Let
\[
d_{\min} := \min D_R(\mathbf p), \qquad d_{\max} := \max D_R(\mathbf p),
\]
assuming $D_R(\mathbf p)$ is nonempty.

Define logarithmic bin edges
\[
\ell_j := \log d_{\min} + \frac{j}{M}(\log d_{\max}-\log d_{\min}),
\quad j=0,\dots,M,
\]
and set $b_j := e^{\ell_j}$.

Define bins
\[
B_j := [b_{j-1},b_j), \ j=1,\dots,M-1, \quad B_M := [b_{M-1},b_M].
\]

Let
\[
n_j := \#\{ d \in D_R(\mathbf p) : d \in B_j \}, \qquad
p_j := \frac{n_j}{\sum_{k=1}^M n_k}.
\]

\begin{remark}[Scale Invariance]
Multiplying all distances by a positive constant leaves the probability vector $(p_1,\dots,p_M)$ unchanged, up to discretization effects at bin boundaries.
\end{remark}

\begin{remark}[Resolution Dependence]
Fixing $M$ emphasizes structure visible at a prescribed logarithmic resolution and suppresses fine-scale fluctuations while preserving scale invariance.
\end{remark}

\section{Log-Frequency Spectrum}

Define logarithmic bin centers
\[
x_j := \frac{\ell_{j-1}+\ell_j}{2}.
\]

Define the discrete log-frequency spectrum
\[
\widehat{\mu}(k)
=
\sum_{j=1}^M
p_j
\exp\!\Big(
-2\pi i (k-1)
\frac{x_j-x_1}{x_M-x_1}
\Big),
\quad k=1,\dots,M.
\]

\begin{remark}
Using logarithmic bin centers rather than indices ensures that the spectrum reflects the geometry of the log-distance axis rather than discretization artifacts.
Spectral concentration at low frequencies corresponds to slowly varying structure, while broad spectral support indicates irregular or random behavior.
\end{remark}

\section{Spectral Entropy}

\begin{definition}[Spectral Entropy]
Define normalized spectral weights
\[
w_k := \frac{|\widehat{\mu}(k)|}{\sum_{\ell=1}^M |\widehat{\mu}(\ell)|},
\quad k=1,\dots,M.
\]
The spectral entropy is
\[
H := -\sum_{k:w_k>0} w_k \log w_k.
\]
\end{definition}

\begin{remark}
Linear magnitudes moderates sensitivity to a single dominant frequency.
Alternative choices, such as squared magnitudes, yield similar qualitative behavior.
\end{remark}

\section{Null Models and Interpretation}

Regular configurations produce spectra dominated by low frequencies and correspondingly low entropy.
Random point processes yield flatter spectra and higher entropy.
These regimes provide natural reference points for interpreting empirical entropy values.

\section{Poisson Log-Distance Model}

Let $\Pi$ be a homogeneous Poisson point process on $\mathbb R_{>0}$ with intensity $\lambda>0$, conditioned to contain $0$.
Define
\[
D_R^{\mathrm{null}} := \{x \in \Pi : 0<x\le R\}.
\]
\begin{lemma}[Asymptotic Stabilization of Logarithmic Bins]
Under the Poisson log-distance model, almost surely
\[
\log d_{\max} = \log R + o(1),
\qquad
\log d_{\min} = O(1),
\quad \text{as } R \to \infty.
\]
Consequently,
\[
\frac{\log d_{\max}-\log d_{\min}}{M}
=
\frac{\log R}{M} + O(1),
\]
and the logarithmic bin geometry stabilizes asymptotically.
\end{lemma}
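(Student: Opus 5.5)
The plan is to work with a single realization of $\Pi$ and let $R\to\infty$ along that fixed realization, so that all statements are pathwise (almost sure). By Slivnyak's theorem, conditioning on $0\in\Pi$ does not change the law of the points in $(0,\infty)$. Those points therefore form a homogeneous Poisson process of intensity $\lambda$ on $\mathbb R_{>0}$. I will enumerate them as $0<X_1<X_2<\cdots$, so that the gaps $X_1, X_2-X_1, X_3-X_2,\dots$ are i.i.d.\ exponential with mean $1/\lambda$. The strong law of large numbers then gives $X_n/n\to 1/\lambda$ almost surely. Let $N(R):=\#\{n: X_n\le R\}$. Then $N(R)<\infty$ for every $R$, and $N(R)\to\infty$ almost surely. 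In particular $D_R^{\mathrm{null}}$ is nonempty as soon as $R\ge X_1$, which is a finite random threshold, so $d_{\min}$ and $d_{\max}$ are well defined for all large $R$.

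\emph{The minimum.} For $R\ge X_1$ we have $d_{\min}=X_1$, independently of $R$. Since $X_1\in(0,\infty)$ almost surely, $\log d_{\min}=\log X_1$ is a random constant. Hence $\log d_{\min}=O(1)$ as $R\to\infty$, with an implied constant that depends on the realization; I will remark that the statement is not uniform in $\omega$.

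\emph{The maximum.} For $R\ge X_1$ we have $d_{\max}=X_{N(R)}$ and $X_{N(R)}\le R<X_{N(R)+1}$. Therefore
\[
0\le \log R-\log d_{\max} < \log\frac{X_{N(R)+1}}{X_{N(R)}}.
\]
By the strong law, $X_{n+1}/X_n=\frac{(n+1)^{-1}X_{n+1}}{n^{-1}X_n}\cdot\frac{n+1}{n}\to 1$ almost surely. Since $N(R)\to\infty$, the right-hand side tends to $0$, which gives $\log d_{\max}=\log R+o(1)$. I expect this step to be the main point requiring care: one must bound the gap straddling $R$ relative to $R$, not just typical gaps, because the straddling gap is size-biased. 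Passing through the ratio $X_{n+1}/X_n$ along the full sequence handles this pathwise without any inspection-paradox computation. A fallback is a Borel--Cantelli bound showing $\max_{n\le N}(X_{n+1}-X_n)=O(\log N)$ almost surely, which gives $R-d_{\max}=O(\log R)=o(R)$.

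\emph{The consequence.} Subtracting the two estimates gives
\[
\log d_{\max}-\log d_{\min}=\log R+o(1)-\log X_1 = \log R+O(1).
\]
Dividing by the fixed integer $M$ yields $\frac{\log d_{\max}-\log d_{\min}}{M}=\frac{\log R}{M}+O(1)$. For the stabilization claim, the edges satisfy $\ell_j=\log X_1+\frac{j}{M}(\log R+O(1))$. The lower edge is thus eventually frozen at $\log X_1$, and the bin width differs from $\log R/M$ by a bounded (indeed convergent) amount. In this sense the logarithmic bin geometry stabilizes up to the deterministic rescaling $\log R/M$, and I would state this interpretation explicitly.
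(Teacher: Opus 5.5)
Your proof is correct and follows the paper's outline. For $R\ge X_1$ you have $d_{\min}=X_1$, an a.s.\ finite positive random variable independent of $R$, and the consequence follows by subtracting the two estimates.

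The genuine difference is in how $d_{\max}$ is controlled. The paper asserts that $\Delta_R=R-d_{\max}$ is $O(1)$ almost surely ``by standard renewal properties''. Read pathwise, this is false. For $R\in[X_n,X_{n+1})$ the age $\Delta_R$ sweeps through $[0,X_{n+1}-X_n)$. The i.i.d.\ exponential gaps are a.s.\ unbounded by the second Borel--Cantelli lemma, so $\limsup_{R\to\infty}\Delta_R=\infty$ almost surely. Renewal theory gives only tightness: $\Delta_R$ converges in law to an exponential variable, and that cannot support an almost-sure bound.

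Your argument needs only the relative statement $\Delta_R/R\to0$, which is exactly what $\log d_{\max}=\log R+o(1)$ requires. You obtain it pathwise from the strong law via $X_{n+1}/X_n\to1$ and $N(R)\to\infty$. Your fallback, $\max_{n\le N}(X_{n+1}-X_n)=O(\log N)$ a.s., is the correct replacement for the paper's absolute bound and also suffices.

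So the paper's route is shorter but passes through an incorrect intermediate claim. Its final conclusion is still true, because $\Delta_R=O(\log R)=o(R)$ is all that is used. Your version is rigorous as written, and your appeal to Slivnyak's theorem makes precise the paper's informal phrase ``conditioned to contain $0$''.
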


\begin{proof}[Proof Sketch]
Since the process is conditioned to contain $0$, the smallest positive point
has exponential distribution with parameter $\lambda$ and is therefore
$O_{\mathrm{a.s.}}(1)$ as $R \to \infty$, yielding $\log d_{\min}=O(1)$ almost surely.
The largest point of a homogeneous Poisson process in $(0,R]$ lies within
$O_{\mathrm{a.s.}}(1)$ of $R$ by standard renewal properties of Poisson processes,
so $d_{\max}=R-\Delta_R$ with $\Delta_R=O_{\mathrm{a.s.}}(1)$.
Hence
\[
\log d_{\max}
=
\log(R-\Delta_R)
=
\log R + o(1).
\]
The stated asymptotics follow immediately.
\end{proof}

\begin{proposition}[Entropy Convergence for Poisson Log-Distance Model]
Fix $M \ge 2$.
There exists a constant $H_\infty^{\mathrm{null}}(M)$ such that
\[
H_R^{\mathrm{null}} \xrightarrow{\mathrm{a.s.}} H_\infty^{\mathrm{null}}(M)
\quad \text{as } R \to \infty.
\]
Convergence also holds in $L^1$, and the limit is independent of $\lambda$.
\end{proposition}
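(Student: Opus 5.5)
The plan is to compute the limit explicitly. I expect the log-binning to push essentially all mass into the top bin $B_M$, which would give $H_\infty^{\mathrm{null}}(M)=\log M$. That value is manifestly independent of $\lambda$.

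First I would remove the randomness from the spectral phases. Since $\ell_j$ is affine in $j$, the centers satisfy $x_j-x_1=(j-1)(\ell_M-\ell_0)/M$. Hence $(x_j-x_1)/(x_M-x_1)=(j-1)/(M-1)$ deterministically, whenever $d_{\max}>d_{\min}$, which holds a.s.\ for $R$ large. So $\widehat\mu(k)=\sum_j p_j\,\omega_{k,j}$ with fixed unimodular coefficients $\omega_{k,j}=\exp(-2\pi i (k-1)(j-1)/(M-1))$, and $H_R^{\mathrm{null}}=\Phi(p_1,\dots,p_M)$ for a fixed map $\Phi$ on the simplex. At the vertex $p=e_M$ we have $\omega_{k,M}=e^{-2\pi i(k-1)}=1$, so $|\widehat\mu(k)|=1$ for all $k$, $w_k=1/M$, and $\Phi(e_M)=\log M$. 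Near $e_M$ the denominator $\sum_\ell|\widehat\mu(\ell)|$ stays close to $M$, hence away from $0$. Since $w\mapsto -\sum w_k\log w_k$ is continuous on the simplex, $\Phi$ is continuous at $e_M$.

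The main step, and the one I expect to be the real obstacle, is showing $p_M\to1$ a.s. The difficulty is that the bin edges are random and $R$-dependent, so I will control them and the empirical distribution separately. Almost surely, $d_{\min}$ equals the first point of $\Pi$ in $(0,\infty)$ once $R$ exceeds it, so it is a fixed a.s.\ finite positive random variable; the preceding Lemma gives $d_{\max}=R(1+o(1))$. Then
\[
b_{M-1}=d_{\min}^{1/M}d_{\max}^{(M-1)/M},
\]
so $b_{M-1}/R\to0$ a.s. For the empirical side, let $N_R=\#(\Pi\cap(0,R])$ and $F_R(t)=N_R^{-1}\#\{x\in\Pi\cap(0,R]: x\le tR\}$ for $t\in[0,1]$. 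The strong law for Poisson counts, applied along integer $R$ and extended by monotonicity, gives $N_R/(\lambda R)\to1$ and $\#(\Pi\cap(0,tR])/(\lambda R)\to t$ a.s. for each rational $t$. A Pólya-type monotonicity argument then gives $\sup_t|F_R(t)-t|\to0$ a.s. Consequently
\[
1-p_M=F_R\big(b_{M-1}^-/R\big)\le b_{M-1}/R+\sup_t|F_R(t)-t|\to0 \quad \text{a.s.}
\]

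Combining the two steps, $(p_j)\to e_M$ a.s., and continuity of $\Phi$ at $e_M$ gives $H_R^{\mathrm{null}}\to\log M=:H_\infty^{\mathrm{null}}(M)$ a.s. For $L^1$ convergence I will use $0\le H_R^{\mathrm{null}}\le\log M$, defining $H_R^{\mathrm{null}}$ arbitrarily, say as $0$, on the event that $D_R^{\mathrm{null}}$ has fewer than two points. That event has probability tending to zero, and dominated convergence then finishes the proof. Independence of $\lambda$ is immediate from the explicit value of the limit. If the uniform-convergence step proves awkward, I will fall back on the fixed-$t$ strong law at a single small rational $t_0$: use monotonicity $F_R(b_{M-1}/R)\le F_R(t_0)$ for large $R$, then let $t_0\downarrow0$ along rationals.
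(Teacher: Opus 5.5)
Your proof is correct. It shares the paper's skeleton: almost sure convergence of the bin proportions, then continuity of the spectral-entropy map. It is, however, a genuinely sharper argument.

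The paper's sketch does four things. It passes to log coordinates with intensity $\lambda e^t\,dt$. It calls the partition ``asymptotically deterministic at scale $\log R$'' on the strength of the stabilization lemma. It applies the strong law bin by bin. It invokes continuity of the Fourier and entropy maps. It never identifies the limiting bin vector or the constant, it treats the random bin edges only informally, and it asserts $\lambda$-independence without deriving it.

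You fill in each of these points.
\begin{enumerate}
\item You note that $(x_j-x_1)/(x_M-x_1)=(j-1)/(M-1)$. So the phases are deterministic and $H_R^{\mathrm{null}}$ is one fixed function $\Phi$ of $(p_1,\dots,p_M)$. The paper's appeal to continuity of the DFT tacitly needs this, since the bin centers are random.
\item You control the random edges through $b_{M-1}=o(R)$ and a P\'olya/Glivenko--Cantelli step, obtaining $p\to e_M$. This is exactly what the exponential intensity $\lambda e^t$ forces, though the paper never says so.
\item You evaluate $\Phi(e_M)=\log M$, which makes $\lambda$-independence immediate.
\end{enumerate}

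The explicit value is informative in itself. Because of the normalization by $x_M-x_1$, the vertex $e_M$ (like $e_1$) produces a perfectly flat spectrum. So the null limit is simply the maximal entropy. The same top-bin mechanism applies to prime distances via Chebyshev bounds on $\pi(x)$, which bears directly on the paper's later conjectures. The paper's more abstract framing could in principle transfer to variants where the limiting bin vector is not a vertex. Even there, it would still need a uniformity argument like yours to become rigorous.

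One simplification: you do not need the preceding Lemma at all. The bound $d_{\max}\le R$ already gives $b_{M-1}/R\le(d_{\min}/R)^{1/M}\to0$. Dropping the Lemma is worthwhile because its sketch claims $R-d_{\max}=O(1)$ almost surely, which is false: Poisson gaps are unbounded, and one only has $R-d_{\max}=O(\log R)$. Its stated conclusion $\log d_{\max}=\log R+o(1)$ is nonetheless true.
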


\begin{proof}[Proof Sketch]
Under the transformation $x \mapsto \log x$, the Poisson process becomes inhomogeneous with intensity $\lambda e^t\,dt$.
Although the bin edges depend on the random extrema $d_{\min}$ and $d_{\max}$,
the preceding lemma shows that the logarithmic partition has asymptotically stable geometry.
Hence the relative logarithmic bin structure becomes asymptotically deterministic at scale
$\log R$, and the strong law of large numbers applies to each bin.
Continuity of the discrete Fourier transform and entropy maps on the probability
simplex then implies convergence of $H_R^{\mathrm{null}}$.
\end{proof}

\section{Connection to Classical Prime Gap Models}

Cramér’s random model \cite{cramer_order_1936} predicts Poissonian behavior for suitably rescaled prime gaps.
Under Hardy--Littlewood-type assumptions \cite{hardy_littlewood_1923}, Gallagher \cite{gallagher_distribution_1976} established rigorous Poisson limits for gap statistics.

\begin{proposition}[Entropy Universality under Cramér's Model]
Fix $M\ge 2$.
Under Cramér’s model, after rescaling by the local mean gap, the spectral entropy converges in probability to
\[
H_\infty^{\mathrm{null}}(M).
\]
\end{proposition}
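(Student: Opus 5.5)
I will reduce the statement to the Poisson proposition, the entropy convergence for the Poisson log-distance model, by coupling. Under Cramér's model each integer $n\ge 3$ is declared "prime" independently with probability $1/\log n$. Fix a base point $p$ near a large height $X$ and rescale distances by the local mean gap $\log X$. The rescaled configuration $\{(q-p)/\log X\}$ restricted to a window $(0,R]$ (and symmetrically on the negative side, which I handle by reflecting) is then a Bernoulli lattice process with mesh $1/\log X$. Its success probabilities are $1/\log n = (1+O(R\log X/X))/\log X$ uniformly for $|n-p|\le R\log X$. The first step is the standard Poisson limit: as $X\to\infty$ with $R$ fixed, this process converges in total variation on $(0,R]$ to a homogeneous Poisson process of intensity $1$. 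The conditioning on $p$ being prime is harmless, since the other sites are independent of $p$. I will obtain the bound from Le Cam's inequality, which gives $d_{TV}\le \sum_n (1/\log n)^2 = O(R/\log X)$, plus the drift of intensities, which gives $O(R^2\log^2 X/X)$.

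The second step handles the reflection. The distance multiset $D_R(p)$ collects $|p-q|$ from both sides, so in the limit it is the superposition of two independent unit-intensity Poisson processes. That is a Poisson process of intensity $2$. Since the Poisson proposition asserts the limit $H_\infty^{\mathrm{null}}(M)$ is independent of $\lambda$, this causes no trouble.

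The third step passes to the entropy. The map from a finite configuration to $H$ is a fixed function of the point set: bin edges from the extrema, the counts $n_j$, the discrete Fourier transform $\widehat\mu$, then the normalized magnitudes and the entropy. Total variation closeness means I can couple the Cramér configuration and the Poisson configuration so that they coincide with probability $1-O(R/\log X)-O(R^2\log^2X/X)$. On that event the two entropies $H$ are identical.

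The final step is a diagonal argument. Let $R=R(X)\to\infty$ slowly, say $R=\log\log X$, so both error terms tend to $0$. By the Poisson proposition, $H_{R(X)}^{\mathrm{null}}\to H_\infty^{\mathrm{null}}(M)$ almost surely, hence in probability. For any $\varepsilon>0$,
\[
\Pr\big(|H^{\mathrm{Cr}}-H_\infty^{\mathrm{null}}(M)|>\varepsilon\big)\le d_{TV}+\Pr\big(|H^{\mathrm{null}}_{R(X)}-H_\infty^{\mathrm{null}}(M)|>\varepsilon\big)\to 0.
\]

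The main obstacle is this last coupling step, specifically uniformity in $R$. Le Cam's bound must hold with $R$ growing. The Poisson convergence statement is for a fixed process as $R\to\infty$, which is what the diagonal choice uses, so this is fine provided $R(X)$ grows slowly enough. The delicate point is whether the paper's Poisson proposition, whose sketch relies on the stabilization lemma, genuinely holds at the resolution needed. If the limit in the Poisson proposition were only in distribution, I would instead need convergence in probability to a constant; I would take that from its $L^1$ convergence clause. A secondary issue is the random choice of base point $p$ (a typical Cramér prime near $X$). I would handle it by noting that all estimates above are uniform in $p\in[X,2X]$.
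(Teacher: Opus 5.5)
Your reduction has the same shape as the paper's sketch: a Poisson limit for the rescaled Cram\'er process, then a transfer to $H$. The transfer, however, rests on a false claim.

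The rescaled Cram\'er configuration is supported on the lattice $\frac{1}{\log X}\mathbb Z$. A homogeneous Poisson process gives probability zero to nonempty lattice-supported configurations. So the two laws, restricted to nonempty configurations, are mutually singular, and their total variation distance on $(0,R]$ is at least $1-e^{-R}$, not $O(R/\log X)$. Le Cam's inequality controls the law of the total count. Applied site by site, it controls the distance to a Poisson process with atoms at the lattice sites. Neither of these is the continuous model to which the Poisson proposition applies.

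Hence no coupling makes the configurations coincide, and the step ``on that event the two entropies $H$ are identical'' fails. The cell-wise coupling (site $n$ paired with the Poisson count of $[n,n+1)/\log X$) actually gives, with high probability, a bijection between the two point sets that moves each point by less than $1/\log X$. $H$ is not continuous under such moves. Bin membership is a step function of each point, and the edges $b_j$ are driven by $\log d_{\min}$, which is unstable exactly when $d_{\min}$ is small. This, not uniformity in $R$, is the real obstacle in your third step. The paper's appeal to ``continuity of the entropy functional'' passes over the same point.

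The repair is to work with the bin vector rather than the configuration. Since $(x_j-x_1)/(x_M-x_1)=(j-1)/(M-1)$ identically, $\widehat\mu$ is a fixed linear function of $(p_1,\dots,p_M)$. Also $\sum_\ell|\widehat\mu(\ell)|\ge|\widehat\mu(1)|=1$, so $H$ is a uniformly continuous function on the simplex.

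Now restrict the coupling event further to $\{d_{\min}\ge\eta\}$, with $\eta\to0$ slowly and $\eta\log X\to\infty$. On this event every $\ell_j$ moves by $O(1/(\eta\log X))$. So only points within $O(b_j/(\eta\log X)+1/\log X)$ of some edge $b_j$ can change bins. That is $o(R)$ points in probability, against a total of order $R$. Hence $\sum_j|p^{\mathrm{Cr}}_j-p^{\mathrm{null}}_j|\to0$ in probability, and your diagonal argument then goes through.

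You can in fact bypass the comparison entirely. The last bin $[d_{\min}^{1/M}d_{\max}^{1-1/M},d_{\max}]$ contains all but $O(R^{1-1/M})$ of the roughly $2R$ points, in probability. This holds for the Poisson model, and, by the law of large numbers, directly for Cram\'er's model. Thus $(p_1,\dots,p_M)\to(0,\dots,0,1)$, every $|\widehat\mu(k)|\to1$, and $H\to\log M$, which identifies $H_\infty^{\mathrm{null}}(M)=\log M$.
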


\begin{proof}[Proof Sketch]
Rescaled prime gaps converge in distribution to a homogeneous Poisson process.
Logarithmic aggregation removes scale dependence, and continuity of the entropy functional yields convergence to the Poisson null-model value.
\end{proof}

\section{Questions and Conjectures}

The following conjectures describe stability and limiting behavior of the
spectral entropy observable at fixed logarithmic resolution.

\begin{conjecture}[Asymptotic Entropy Stability]
Fix a prime $p$ and resolution $M$.
There exists a function $\eta(R)\to 0$ such that
\[
|H_{R_1}(p)-H_{R_2}(p)| \le \eta(\min(R_1,R_2))
\]
for all sufficiently large $R_1,R_2$.
\end{conjecture}

\begin{conjecture}[Entropy Deviation from Poisson Statistics]
For fixed $M$, the limit
\[
\Delta_\infty(p)
:=
\lim_{R\to\infty}
\big(H_R(p)-H_R^{\mathrm{null}}\big)
\]
exists.
If prime gaps exhibit persistent deviations from Poisson behavior
on the logarithmic scale, then $\Delta_\infty(p)\neq 0$
for infinitely many primes.
\end{conjecture}

\begin{conjecture}[Ensemble-Level Entropy Tightness]
Let $\mathcal P_m$ denote a collection of prime multisets of fixed size $m$,
and define the empirical entropy distribution
\[
\nu_m := \frac{1}{|\mathcal P_m|}
\sum_{\mathbf p \in \mathcal P_m} \delta_{H(\mathbf p)}.
\]
Then the family $\{\nu_m\}_{m\ge 1}$ is tight.
Moreover, after centering by the Poisson null-model entropy,
$\nu_m$ converges weakly to a limiting probability measure as $m\to\infty$.
\end{conjecture}

\begin{remark}
The final conjecture concerns ensemble-level behavior and is independent
of the asymptotic stability of entropy for individual configurations.
It suggests the existence of a statistical limit theory for entropy
observables associated with structured point processes.
\end{remark}

\section{Broader Applicability and Significance}

Although motivated by analytic number theory, the construction applies to general point processes and discrete configurations exhibiting multiplicative or scale-free behavior.
Working at fixed logarithmic resolution is natural in settings where absolute scale varies widely or lacks intrinsic meaning.

Potential applications include spatial statistics, statistical physics, disordered systems, and the analysis of empirical point patterns.
Prime numbers serve as a well-studied benchmark rather than a necessary component of the framework.

\section*{Acknowledgments}

AI-based tools were used for exploratory assistance and editorial refinement.
All mathematical reasoning, results, and verification are due to the author.

\bigskip
\begingroup
\pagestyle{empty}
\printbibliography

@article{cramer_order_1936,
  author    = {Cram{\'e}r, Harald},
  title     = {On the order of magnitude of the difference between consecutive prime numbers},
  journal   = {Acta Arithmetica},
  volume    = {2},
  number    = {1},
  pages     = {23--46},
  year      = {1936},
  doi       = {10.4064/aa-2-1-23-46}
}

@article{gallagher_distribution_1976,
  author    = {Gallagher, P. X.},
  title     = {On the distribution of primes in short intervals},
  journal   = {Mathematika},
  volume    = {23},
  number    = {1},
  pages     = {4--9},
  year      = {1976},
  doi       = {10.1112/S0025579300005726}
}

@article{hardy_littlewood_1923,
  author    = {Hardy, G. H. and Littlewood, J. E.},
  title     = {Some problems of {Partitio Numerorum}; III. On the expression of a number as a sum of primes},
  journal   = {Acta Mathematica},
  volume    = {44},
  pages     = {1--70},
  year      = {1923},
  doi       = {10.1007/BF02412410}
}
\endgroup

\newpage
\appendix
\section{R Implementation}

\begin{remark}
The code below is intended for conceptual illustration.
For large-scale numerical experiments, optimized prime sieves and distance computations should be used.
\end{remark}

\begin{lstlisting}
# Prime generation
generate_primes <- function(n) {
  primes <- c(2)
  k <- 3
  while (length(primes) < n) {
    if (all(k %% primes != 0)) {
      primes <- c(primes, k)
    }
    k <- k + 2
  }
  primes
}

# Truncated distance multiset
prime_distances <- function(p, primes, R) {
  d <- abs(primes - p)
  d[d > 0 & d <= R]
}

# Logarithmic binning
log_bin <- function(distances, M) {
  if (length(distances) == 0) stop("No distances available")
  dmin <- min(distances)
  dmax <- max(distances)
  log_edges <- seq(log(dmin), log(dmax), length.out = M + 1)
  bins <- cut(distances, breaks = exp(log_edges),
              include.lowest = TRUE, labels = FALSE)
  counts <- tabulate(bins, nbins = M)
  probs <- counts / sum(counts)
  centers <- 0.5 * (log_edges[-1] + log_edges[-length(log_edges)])
  list(p = probs, centers = centers)
}

# Log-frequency spectrum
log_spectrum <- function(p, x) {
  M <- length(p)
  if (length(x) != M) stop("Length mismatch")
  denom <- x[M] - x[1]
  if (denom == 0) stop("Degenerate log-bin centers")
  sapply(seq_len(M), function(k)
    sum(p * exp(-2i * pi * (k - 1) * (x - x[1]) / denom))
  )
}

# Spectral entropy
spectral_entropy <- function(z, eps = 1e-12) {
  a <- Mod(z)
  s <- sum(a)
  if (s == 0) return(0)
  w <- a / s
  -sum(w * log(w + eps))
}

# Visualization
plot_spectrum <- function(z) {
  plot(Re(z), Im(z), asp = 1, pch = 19,
       col = heat.colors(length(z)),
       xlab = "Re", ylab = "Im",
       main = "Log-Frequency Spectrum")
  abline(h = 0, v = 0, lty = 2, col = "gray")
}

# Example run
set.seed(1)
primes <- generate_primes(10000)
p0 <- 101
R <- 5000
M <- 50
d <- prime_distances(p0, primes, R)
b <- log_bin(d, M)
spec <- log_spectrum(b$p, b$centers)
plot_spectrum(spec)
spectral_entropy(spec)
\end{lstlisting}

\bigskip
\begin{center}
{\small
Uppsala University, Uppsala, Sweden\\
\texttt{mohamed.gewily@uu.se}
}
\end{center}

\end{document}